\newtheorem{theorem}{Theorem}
\newtheorem{corollary}[theorem]{Corollary}
\newtheorem{lemma}[theorem]{Lemma}
\newtheorem{property}[theorem]{Property}
\newtheorem*{remark}{Remark}
\title{Sibson's formula for higher order Voronoi diagrams}
\author{Merc\`e Claverol, Andrea de las Heras-Parrilla,\\Clemens Huemer and Dolores Lara}
\date{}
\begin{document}

\maketitle

\begin{abstract}
    Let $S$ be a set of $n$ points in general position in $\mathbb{R}^d$. The order-$k$ Voronoi diagram of $S$, $V_k(S)$, is a subdivision of $\mathbb{R}^d$ into cells whose points have the same $k$ nearest points of $S$.
    Sibson, in his seminal paper from 1980 (A vector identity for the Dirichlet tessellation), gives a formula to express a point $Q$ of $S$ as a convex combination of other points of $S$ by using ratios of volumes of the intersection of cells of $V_2(S)$ and the cell of $Q$ in $V_1(S)$. The natural neighbour interpolation method is based on Sibson's formula. We generalize his result to express $Q$ as a convex combination of other points of $S$ by using ratios of volumes from Voronoi diagrams of any given order.
\end{abstract}
\section{Introduction}

Let $S$ be a set of $n$ points in general position in $\mathbb{R}^d$, meaning no $m$ of them lie in a $(m-2)$-dimensional flat for $m=2,3,...,d+1$ and no $d+2$ of them lie in the same $d$-sphere, and let $k$ be a natural number with $1\leq k \leq n-1$.
Let $\sigma_d$ denote the Lebesgue measure on $\mathbb{R}^d$, to simplify we just write $\sigma$.

The order-$k$ Voronoi diagram of $S$, $V_k(S)$, is a subdivision of $\mathbb{R}^d$ into cells such that points in the same cell have the same $k$ nearest points of $S$.
Thus, each cell $f(P_k)$ of $V_k(S)$ is defined by a subset $P_k$ of $S$ of $k$ elements, where each point of $f(P_k)$ has $P_k$ as its $k$ closest points from $S$. See Figure~\ref{fig:F4-V}.

\begin{figure}[!ht]
	\centering
	\includegraphics[scale=0.6]{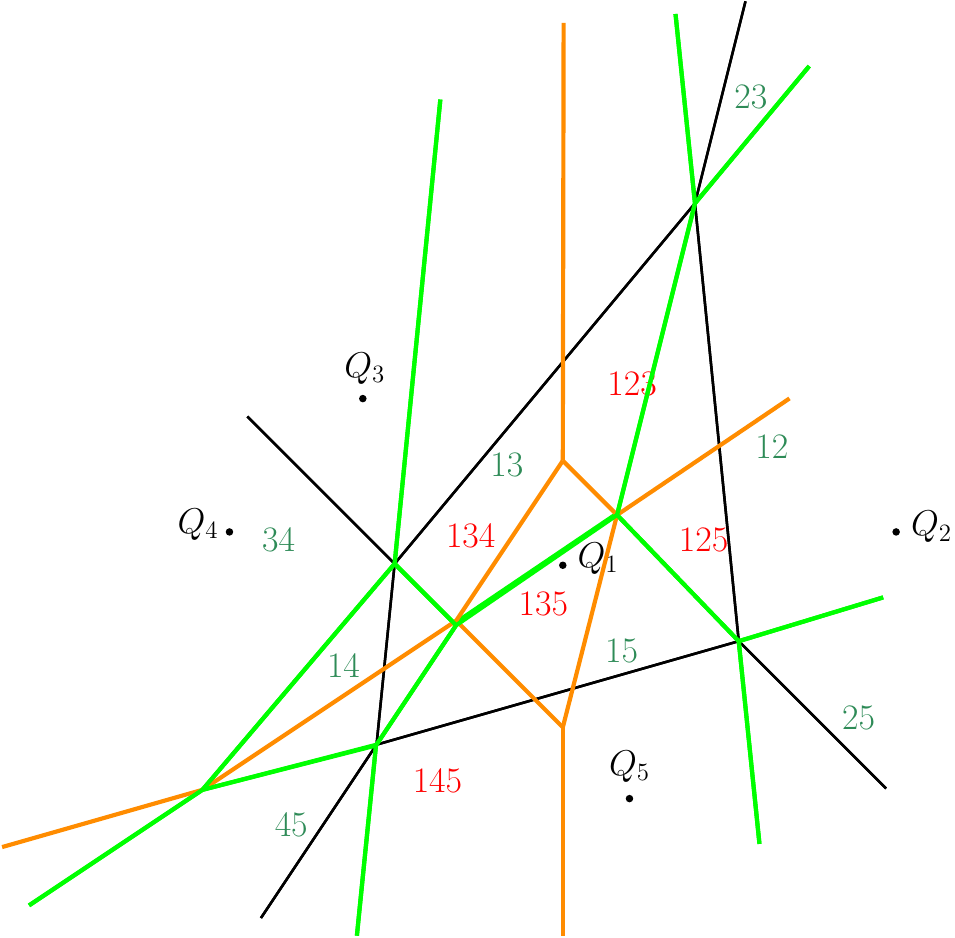}
	\caption{
        For a set $S=\{Q_1,\cdots,Q_5\}$ of five points in $\mathbb{R}^2$.
        $V_1(S)$ is shown in black, $V_2(S)$ in green, and $V_3(S)$ in orange colour.
		Each cell of $V_2(S)$ ($V_3(S)$) is labeled by the indices of its two (three) nearest points of $S$. 
	}
	\label{fig:F4-V}
\end{figure}

For the order-$k$ Voronoi diagram of $S$, the region $R_k(\ell)$ of $Q_\ell\in S$ is defined as the set of cells of $V_k(S)$ that have the point $Q_\ell$ as one of their $k$ nearest neighbours. 
See Figure~\ref{fig:R1InR2}.
These regions are not necessarily convex but star-shaped, see \cite{aurenhammer1991simple, edelsbrunner2018multiple, okabe2009spatial, toth1976multiple}, and it is known that $R_1(\ell)$ is contained in the kernel of $R_k(\ell)$; see \cite{CdH21}.
Also, these regions are related to Brillouin zones.
For a given $k$, the region $R_k(\ell)\setminus R_{k-1}(\ell)$ is known as a Brillouin zone of $Q_\ell$. Brillouin zones have been studied mainly for lattices but also for arbitrary discrete sets, see e.g. \cite{jones1984geometric, veerman2000brillouin}.

\begin{figure}[!ht]
	\centering	
	\includegraphics[scale=0.5,page=3, trim=1cm 12cm 0.5cm 5cm, clip]{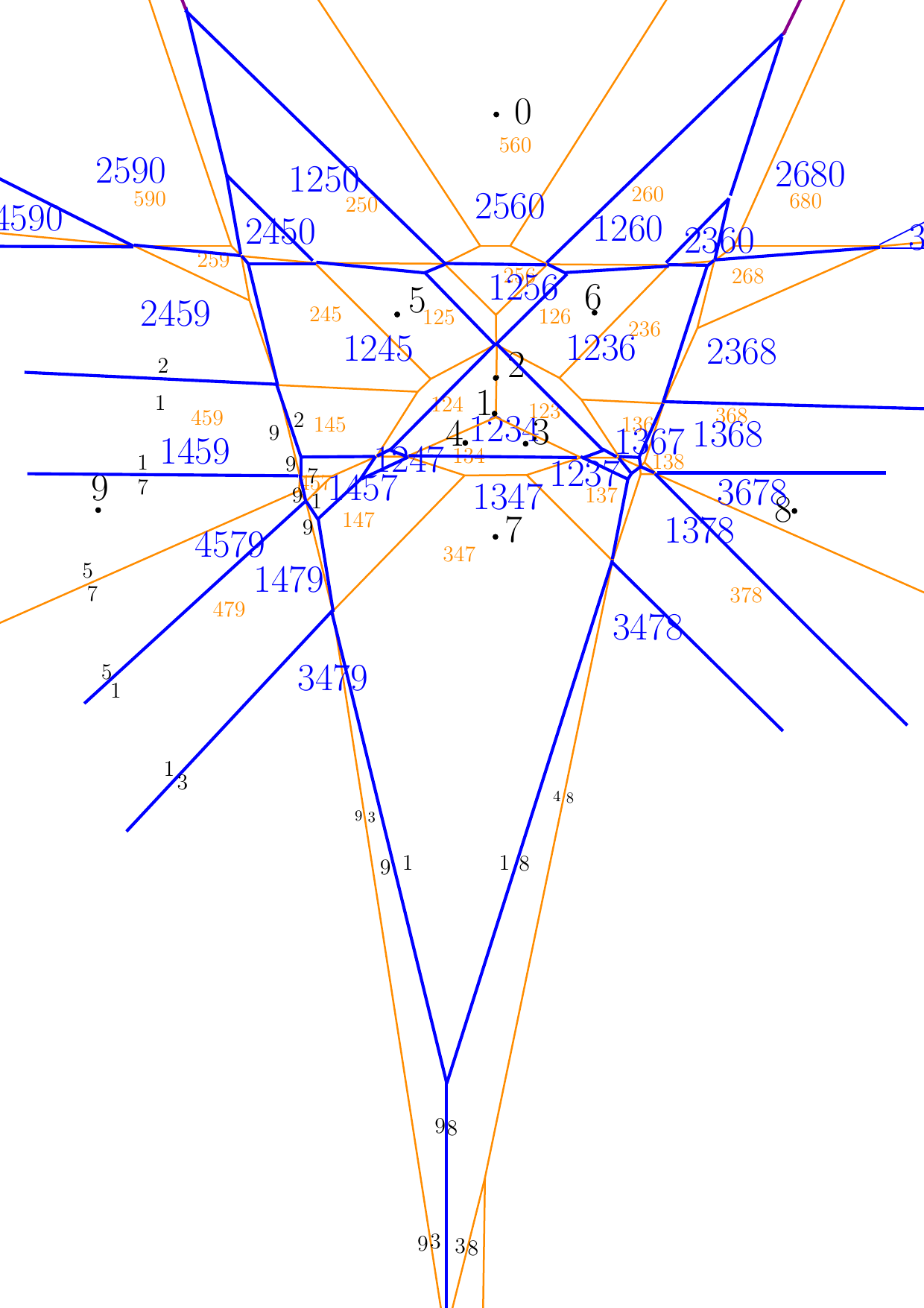}
	\caption{$R_1(1)$ is the cell $f(\{Q_1\})$ in $V_1(S)$. $R_2(1)$ is the union of cells  of $V_2(S)$ that have $Q_1$ as one of its two nearest neighbours. $R_1(1)\subset R_2(1)$.}
    \label{fig:R1InR2}
\end{figure}

Local coordinates based on Voronoi diagrams were introduced by Sibson~\cite{sibson1980vector}. He states that, given a set $S$ of $n$ points of $\mathbb{R}^d$ in general position, a point $Q_\ell\in S$ can be expressed as a convex combination of its nearest points of $S$. This is described next.
Cells of $V_2(S)$ that intersect $f(\{Q_\ell\})$ in $V_1(S)$ are of the form $f(\{Q_\ell,Q_j\})$, i.e., cells defined by $Q_\ell$ and another point $Q_j$, that we call its natural neighbour. 
These intersections give ratios of volumes which are the coefficients multiplying the corresponding natural neighbours in the convex combination that expresses $Q_\ell$. 
Volumes $\sigma(f(\{Q_\ell,Q_j\})\cap f(\{Q_\ell\}))$ are equal to the volumes given by the intersection of the cells of $V_1(S\setminus\{Q_\ell\})$ and $f(Q_\ell)$ in $V_1(S)$, see Figure~\ref{fig:F5-NN}.

\begin{theorem}\label{thm:sibson}
    (Local coordinates property \cite{sibson1980vector}). For a bounded cell $f(\{Q_\ell\})$ of $V_1(S)$, 
    \begin{equation}
        Q_\ell = \sum_{j\neq \ell}\frac{\sigma(f(\{Q_\ell,Q_j\})\cap f(\{Q_\ell\}))}{\sigma(f(\{Q_\ell\}))}Q_j
    \end{equation}
\end{theorem}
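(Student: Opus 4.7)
The plan is to prove the equivalent identity $\sum_{j\neq \ell} \sigma(V_j)(Q_j - Q_\ell) = 0$, where $V_j := f(\{Q_\ell,Q_j\}) \cap f(\{Q_\ell\})$. Since the cells $V_j$ partition $f(\{Q_\ell\})$ we have $\sum_j \sigma(V_j) = \sigma(f(\{Q_\ell\}))$, so this identity rearranges to the desired convex combination. The key observation is that the constant vector $Q_j - Q_\ell$ is exactly the gradient of
$$G_j(x) := \tfrac{1}{2}\bigl(|x-Q_\ell|^2 - |x-Q_j|^2\bigr),$$
so the gradient theorem yields $(Q_j-Q_\ell)\,\sigma(V_j) = \int_{V_j}\nabla G_j\,dx = \int_{\partial V_j} G_j(x)\,\vec{n}(x)\,dS$ for each $j$, and the strategy is to sum this over $j$ and show that the right-hand side telescopes to zero.

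Next I would decompose $\partial V_j$ into an outer part, lying on $\partial f(\{Q_\ell\})$, and inner facets shared with neighbouring cells $V_k$. Using the definition of $V_j$ together with general position, one checks that the outer part is contained in the perpendicular bisector of $Q_\ell Q_j$, where $|x-Q_\ell|=|x-Q_j|$ forces $G_j \equiv 0$, so the outer pieces contribute nothing. An inner facet $B_{jk} = V_j \cap V_k$ lies on the perpendicular bisector of $Q_j Q_k$, appears once in $\partial V_j$ with some outward unit normal $\vec{n}_{jk}$ and once in $\partial V_k$ with outward normal $-\vec{n}_{jk}$. When we sum the gradient identity over all $j$, the net contribution of $B_{jk}$ becomes
$$\int_{B_{jk}}(G_j - G_k)\,\vec{n}_{jk}\,dS = \int_{B_{jk}} \tfrac{1}{2}\bigl(|x-Q_k|^2 - |x-Q_j|^2\bigr)\,\vec{n}_{jk}\,dS,$$
which vanishes pointwise because $|x-Q_j|=|x-Q_k|$ on $B_{jk}$.

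Combining these two cancellations, the sum over $j$ collapses to $\sum_{j\neq\ell}(Q_j-Q_\ell)\,\sigma(V_j) = 0$, which is the theorem. Boundedness of $f(\{Q_\ell\})$, assumed in the statement, guarantees that each $V_j$ is a bounded convex polytope and that the divergence theorem applies. The step I expect to require the most care is the boundary analysis: one must verify that every facet of $V_j$ lies either on the bisector of $Q_\ell Q_j$ (the outer facets) or on a bisector of $Q_j Q_k$ for some other natural neighbour $Q_k$ (an inner facet matched with a unique $V_k$ on the opposite side), so that the cancellation above is indeed pairwise. This combinatorial matching follows from writing $V_j$ as the intersection of halfspaces coming from the definitions of $f(\{Q_\ell\})$ and $f(\{Q_\ell,Q_j\})$, together with general position, but it is the one bookkeeping input that needs to be made precise before the integral computation can be carried out.
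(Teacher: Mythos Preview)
The paper does not actually prove Theorem~\ref{thm:sibson}; it only quotes it from Sibson's original article~\cite{sibson1980vector} and later uses it as the base case $k=1$ in the induction for Theorem~\ref{thm:SibsonR_k}. So there is no ``paper's own proof'' to compare against.

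That said, your argument is correct and is essentially the classical divergence-theorem proof (close in spirit to Sibson's original, and to Aurenhammer's proof for power diagrams). The computation $\nabla G_j = Q_j - Q_\ell$ is right, and the two cancellation mechanisms you identify are exactly the ones that make the proof work: (i) the outer facet of $V_j$ lying on $\partial f(\{Q_\ell\})$ is indeed contained in the bisector of $Q_\ell Q_j$, because a point of $V_j$ has $Q_\ell$ as nearest and $Q_j$ as second nearest, so when it reaches $\partial f(\{Q_\ell\})$ the tie must be with $Q_j$; (ii) an interior facet $V_j\cap V_k$ lies on the bisector of $Q_jQ_k$ for the analogous reason, and the pairing with opposite normals kills $G_j-G_k$ there. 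The bookkeeping you flag as needing care is genuinely the only place where something could go wrong, and your description of how to settle it (via the halfspace description of $V_j$ and general position) is the right one. Nothing is missing.
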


Sibson's formula has been used to define the natural neighbour interpolation method~\cite{sibson1981brief}.
Given a set of points and a function, this interpolation method provides a smooth approximation of new points to the function.
Sibson's algorithm uses the closest subset of the input set $S\setminus\{Q_\ell\}$ to interpolate the function value of a query point, $Q_\ell$, and applies weights based on the ratios of volumes provided by Theorem~\ref{thm:sibson}.
Local coordinates and the natural neighbour interpolation method have been studied e.g. in~\cite{farin1990surfaces, piper1993properties,sugihara1999surface}, and they have many applications such as reconstruction of a surface from unstructured data or interpolation of  rainfall data, see \cite{okabe1994nearest, sugihara1999surface}.

\begin{figure}[!ht]
	\centering
	\subfloat[]{
		\includegraphics[scale=0.28,page=1, trim=0cm 0cm 1cm 0cm, clip]{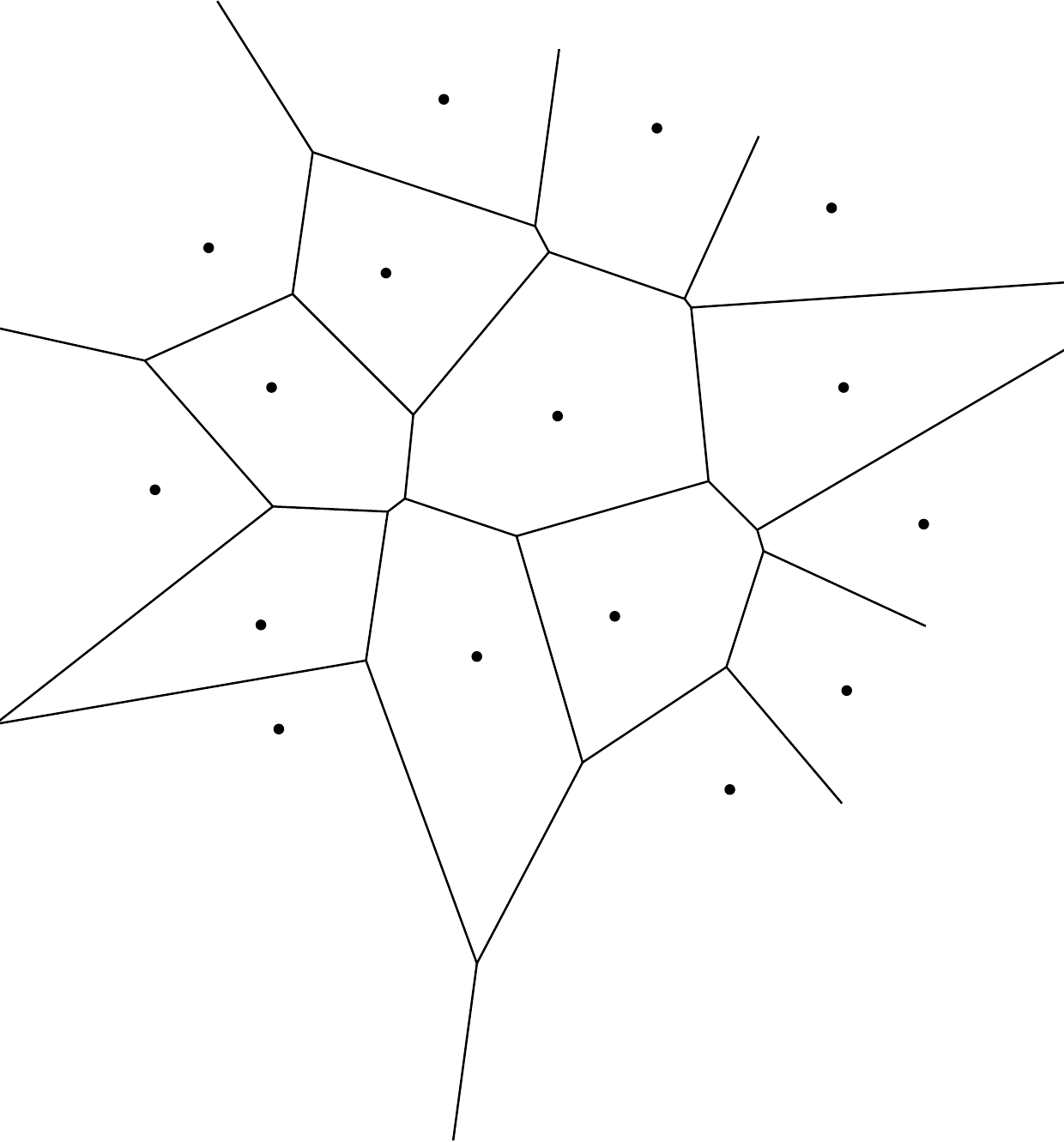}
	}~
	\subfloat[]{
		\includegraphics[scale=0.28,page=2, trim=1cm 2.8cm 0cm 3.5cm, clip]{F5-NaturalNeighbour.pdf}
	}
	\caption{
		In $\mathbb{R}^2$. 
        (a) The initial Voronoi diagram $V_1(S\setminus\{Q_\ell\})$ without query point $Q_\ell$.
		(b) Coloured areas given by the intersections of $f(\{Q_\ell\})$ and the cells of $V_1(S\setminus\{Q_\ell\})$, are the same as the ones given by the intersections of the cells of $V_2(S)$ (shown in dashed) with the cell $f(\{Q_\ell\})$.
	}
	\label{fig:F5-NN}
\end{figure}

Aurenhammer gave a generalization of Sibson's result to Voronoi diagrams of higher order, and more generally to power diagrams, see \cite{aurenhammer1988linear}.
Aurenhammer's formula allows to write a point $Q_\ell$ of $S$ as a linear combination of other points of $S$.
We state this in Theorem~\ref{thm:Sibsonkface} below. The formula in Theorem~\ref{thm:Sibsonkface} is defined in terms of intersections of cells of $V_{k-1}(S)$ and $V_{k+1}(S)$ with a cell of $V_{k}(S)$. This formula works for a bounded cell of $V_k(S)$.

Our main contribution is another generalization of Sibson's result, stated in  Theorem~\ref{thm:SibsonR_k}.
In this theorem, we express a point $Q_\ell\in S$ as a convex combination of its neighbours of $S$ using ratios of volumes in the region $R_k(\ell)$.
Similar to Sibson's formula that required the cell of the point $Q_\ell$ to be bounded, our formula requires its region $R_k(\ell)$ to be bounded.
For the case $k=1$, Theorem~\ref{thm:SibsonR_k} coincides with Theorem~\ref{thm:sibson}.

This paper is organized as follows. 
In Section~\ref{sec:AurenhammerFormula} we revisit the formula of Aurenhammer for higher order Voronoi diagrams and we give a geometric interpretation of the formula.  
Our main result is given in Section~\ref{sec:coordinatesVoronoiDiagrams}, where we detail our generalization of Sibson's formula.
Finally, Section~\ref{sec:interpolatión} is on how our generalization of Sibson's formula from Section~\ref{sec:coordinatesVoronoiDiagrams} could be used for interpolation.

\section{ A revisit of Aurenhammer's formula}\label{sec:AurenhammerFormula}

Next, we state the theorem of Aurenhammer \cite{aurenhammer1988linear} in terms of $V_k(S)$.

\begin{theorem} (\cite{aurenhammer1988linear}) \label{thm:Sibsonkface}
    Let $2\leq k\leq n-2$ and let $f(P_k)$ be a bounded cell of $V_k(S)$. 
    Then,
    \begin{equation*}
        \sum_{\substack{f(P_{k-1}) \in V_{k-1}(S)\\ Q_i \in P_k\setminus P_{k-1}}}\sigma (f(P_{k-1})\cap f(P_k))Q_i=\sum_{\substack{f(P_{k+1}) \in V_{k+1}(S)\\ Q_j \in P_{k+1}\setminus P_k}}\sigma (f(P_{k+1})\cap f(P_k))Q_j
    \end{equation*}
\end{theorem}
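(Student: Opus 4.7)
The plan is to rewrite both sides as weighted sums over a common partition of $f(P_k)$, and then to apply the divergence theorem to a carefully chosen continuous function that vanishes on $\partial f(P_k)$.

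First, for each $Q_i\in P_k$ and $Q_j\in S\setminus P_k$, I would define
\[
C_{ij}:=f(P_k)\cap f(P_k\setminus\{Q_i\})\cap f(P_k\cup\{Q_j\}),
\]
the set of $x\in f(P_k)$ whose farthest top-$k$ neighbour is $Q_i$ and whose $(k+1)$-th nearest neighbour is $Q_j$. General position guarantees that $\{C_{ij}\}$ partitions $f(P_k)$ up to a set of measure zero; moreover, for every cell $f(P_{k-1})$ meeting $f(P_k)$ with positive measure, $P_{k-1}=P_k\setminus\{Q_i\}$ for a unique $Q_i$, and similarly $P_{k+1}=P_k\cup\{Q_j\}$ for a unique $Q_j$. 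Both sides of the identity then reduce to sums over the same partition,
\[
\mathrm{LHS}=\sum_{i,j}\sigma(C_{ij})\,Q_i,\qquad \mathrm{RHS}=\sum_{i,j}\sigma(C_{ij})\,Q_j,
\]
so the theorem is equivalent to the vector identity $\sum_{i,j}\sigma(C_{ij})(Q_j-Q_i)=0$.

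Next I would introduce $u(x):=h^{(k+1)}(x)-h^{(k)}(x)$, where $h^{(r)}(x)$ denotes the $r$-th smallest value in $\{\|x-Q\|^2:Q\in S\}$. As a difference of two continuous order statistics, $u$ is continuous on $\mathbb{R}^d$. On each $C_{ij}$ it coincides with the affine function $\|x-Q_j\|^2-\|x-Q_i\|^2$, whose gradient is the constant vector $-2(Q_j-Q_i)$. On any facet of the bounded polytope $f(P_k)$ two points of $S$ are equidistant from $x$, forcing $h^{(k)}=h^{(k+1)}$, hence $u\equiv 0$ on $\partial f(P_k)$.

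Finally, applying the divergence theorem componentwise on each $C_{ij}$ and summing over $(i,j)$ would give
\[
-2\sum_{i,j}\sigma(C_{ij})(Q_j-Q_i)=\sum_{i,j}\int_{\partial C_{ij}} u\, n\, dS.
\]
Internal interfaces shared by two regions $C_{ij},C_{i'j'}$ contribute opposite outward normals to pieces of identical integrand (by continuity of $u$) and hence cancel, while the remaining boundary lies in $\partial f(P_k)$, where $u\equiv 0$. Both integrals vanish, giving the desired identity. The hard part is spotting the function $u=h^{(k+1)}-h^{(k)}$: once its gradient on $C_{ij}$ is identified with the sought vector $-2(Q_j-Q_i)$ and its vanishing on $\partial f(P_k)$ is noted (which is exactly the defining property of the facets of the cell), the divergence-theorem step is forced, and the only remaining care is the bookkeeping that passes from the two sums in the statement to the common partition $\{C_{ij}\}$.
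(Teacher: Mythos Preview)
The paper does not prove this theorem: it is quoted from Aurenhammer~\cite{aurenhammer1988linear} and then only illustrated and interpreted geometrically in Section~\ref{sec:AurenhammerFormula}. So there is no in-paper proof to compare against.

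That said, your argument is correct and is precisely the natural generalization of Sibson's original vector-identity proof for $k=1$. The key observations are all sound: (i) the common refinement $\{C_{ij}\}$ of $f(P_k)$ by $V_{k-1}(S)$ and $V_{k+1}(S)$ lets you rewrite both sides over the same partition; (ii) the function $u=h^{(k+1)}-h^{(k)}$ is continuous, piecewise affine on that refinement with constant gradient $-2(Q_j-Q_i)$ on $C_{ij}$, and vanishes on $\partial f(P_k)$ because the facets of an order-$k$ cell lie on bisectors where the $k$-th and $(k{+}1)$-st nearest points are equidistant; (iii) the componentwise Gauss--Green formula $\int_{C_{ij}}\nabla u=\int_{\partial C_{ij}} u\,n$ then telescopes over the partition. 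Each $C_{ij}$ is a bounded convex polytope, so the regularity needed for the integration-by-parts step is available.

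Aurenhammer's own formulation is in the language of power diagrams (of which $V_k(S)$ is an instance, with sites the centroids of $k$-subsets), and his argument there is essentially the same divergence-type identity applied to the power distance; your proof is the specialization of that to the order-$k$ Voronoi case written directly in terms of the order statistics $h^{(k)}$, $h^{(k+1)}$. The only small piece of bookkeeping you might make explicit is why every $(k{-}1)$-cell (resp.\ $(k{+}1)$-cell) meeting $f(P_k)$ with positive measure must have $P_{k-1}\subset P_k$ (resp.\ $P_{k+1}\supset P_k$); this is immediate from the definition of the cells and general position, and you already allude to it.
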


\begin{figure}[!ht]
	\centering
	\subfloat[]{
		\includegraphics[scale=0.3, page=1, trim=0cm 15cm 0cm 0cm, clip]{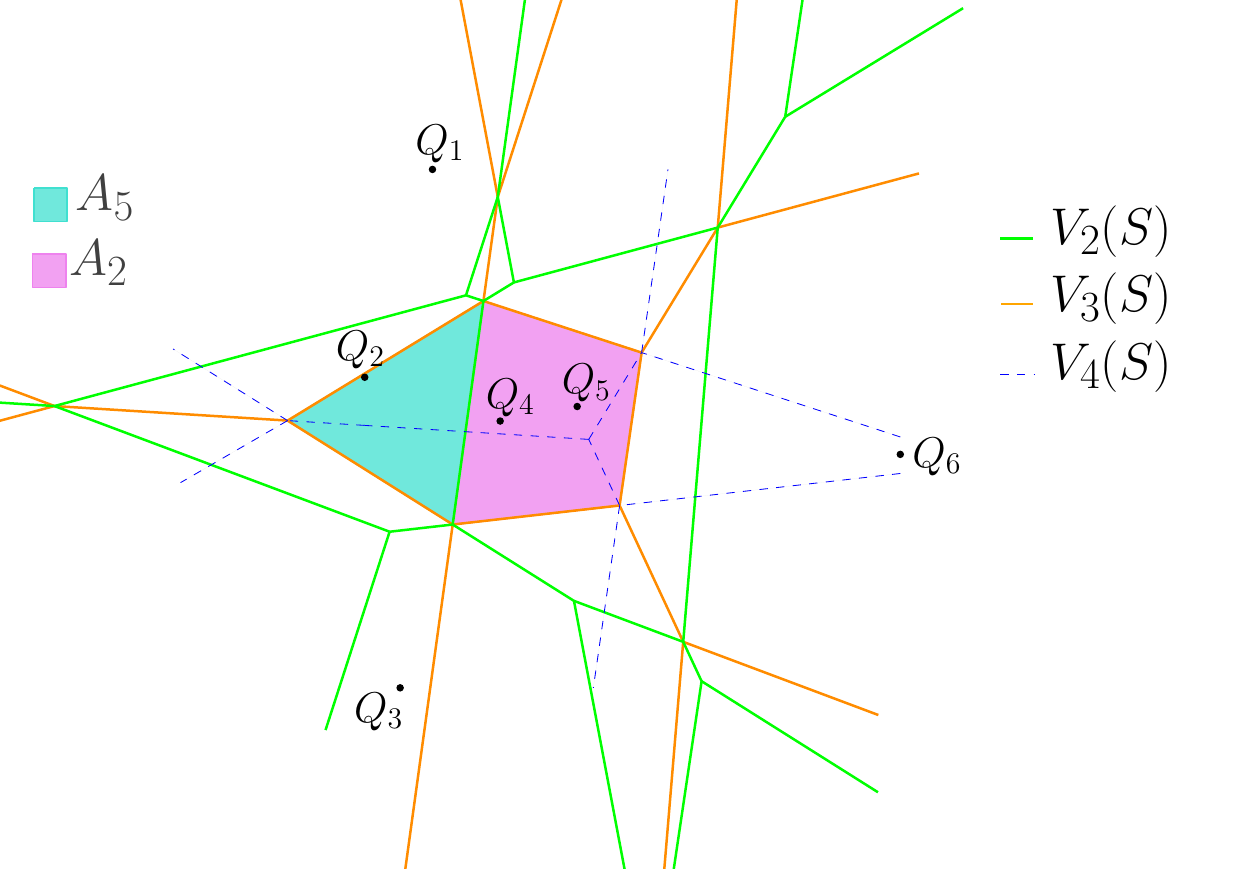}
	}~
	\subfloat[]{
		\includegraphics[scale=0.3, page=2, trim=0cm 15cm 0cm 0cm, clip]{Thm4.pdf}
	}
	\caption{
        Illustrating Theorem~\ref{thm:Sibsonkface} for $f(\{Q_2,Q_4,Q_5\})$ in $V_{3}(S)$, where $S$ is a set of six points in $\mathbb{R}^2$. In this case the equation reduces to $\sigma(A_5) Q_5 + \sigma(A_2)Q_2 =
        \sigma(B_1)Q_1 + \sigma(B_3)Q_3 + \sigma(B_6)Q_6$.
         (a) Regions $A_i$ are the cells of $V_2(S) \cap f(\{Q_2,Q_4,Q_5\})$, whose points have $Q_i$ as the third nearest neighbour of $S$.  
         (b) Regions $B_i$ are the cells of $V_4(S) \cap f(\{Q_2,Q_4,Q_5\})$, whose points have $Q_i$ as the fourth nearest neighbour of $S$.
	}
	\label{fig:thm4Example}
\end{figure}

For an example illustrating Theorem~\ref{thm:Sibsonkface}, see Figure~\ref{fig:thm4Example}.

In the following we examine the generalization of Sibson's theorem to higher order Voronoi diagrams from Theorem~\ref{thm:Sibsonkface} in more detail for  cells $f(P_k)$ of $V_k(S)$, when $S$ is a point set in $\mathbb{R}^2$.
Divide both sides of the equation given in Theorem~\ref{thm:Sibsonkface} by $\sigma(f(P_k))$; then, each side of the equation describes a point $H$ that is a convex combination of points from $S$. 
We have 
\begin{equation}\label{eqn:H}
       H= \hspace{-0.5cm} \sum_{\substack{f(P_{k-1}) \in V_{k-1}(S)\\ Q_i \in P_k\setminus P_{k-1}}} \frac{ \sigma (f(P_{k-1})\cap f(P_k))}{\sigma(f(P_k))}Q_i = \hspace{-0.5cm}\sum_{\substack{f(P_{k+1}) \in V_{k+1}(S)\\ Q_j \in P_{k+1}\setminus P_k}} \frac{\sigma (f(P_{k+1})\cap f(P_k))}{\sigma(f(P_k))}Q_j
    \end{equation}

What can we say about this point $H$?\vspace{0.2cm}

Let $f(P_k)$ be an $r$-gon. Then $S$ contains $r$ points $Q_1, \ldots, Q_r$, such that each edge of the $r$-gon lies on a perpendicular bisector between two of these $r$ points, and each vertex, $C_{ij\ell}$, of $f(P_k)$ is the center of a circle passing through three of them, $Q_i$, $Q_j$, and $Q_\ell$; see e.g.~\cite{CdH21, lee1982Voronoi}.

We denote with $\Delta(ABC)$ the triangle with vertices $A$, $B$, and $C$, and with $\square(ABCD)$ the quadrilateral with vertices $A,B,C$ and $D$, in cyclic order. 

Let us consider the case when $f(P_k)$ is a quadrilateral cell of $V_k(S)$ with vertices $C_{123}$, $C_{124}, C_{134}$, and $C_{234}$, in cyclic order along the boundary of the quadrilateral cell $f(P_k)=\square(C_{123}C_{124}C_{134}C_{234})$. 

One of the diagonals $C_{123} C_{134}$ is an edge of $V_{k-1}(S)$ and the other one, $C_{124}C_{234}$, of $V_{k+1}(S)$.
Figure~\ref{fig:Sibson4gon} shows an example.
We refer to~\cite{CdH21,lee1982Voronoi} for a more detailed discussion on the structure of cells of $V_k(S)$.
Theorem~\ref{thm:Sibsonkface} states in this case that 
\begin{equation}\label{Sibson4gonface}
    \begin{split}
        H&=Q1 \cdot \frac{\sigma(\Delta(C_{123}C_{134}C_{234}))}{\sigma(\square(C_{123}C_{124}C_{134}C_{234}))} + Q3 \cdot \frac{\sigma(\Delta(C_{123}C_{124}C_{134}))}{\sigma(\square(C_{123}C_{124}C_{134}C_{234}))}\\
        & = Q2 \cdot \frac{\sigma(\Delta(C_{124}C_{134}C_{234}))}{\sigma(\square(C_{123}C_{124}C_{134}C_{234}))} + Q4 \cdot \frac{\sigma(\Delta(C_{124}C_{234}C_{123}))}{\sigma(\square(C_{123}C_{124}C_{134}C_{234}))}
    \end{split}
\end{equation}

\begin{figure}[!ht]
    \centering
    \includegraphics[width=0.5\columnwidth]{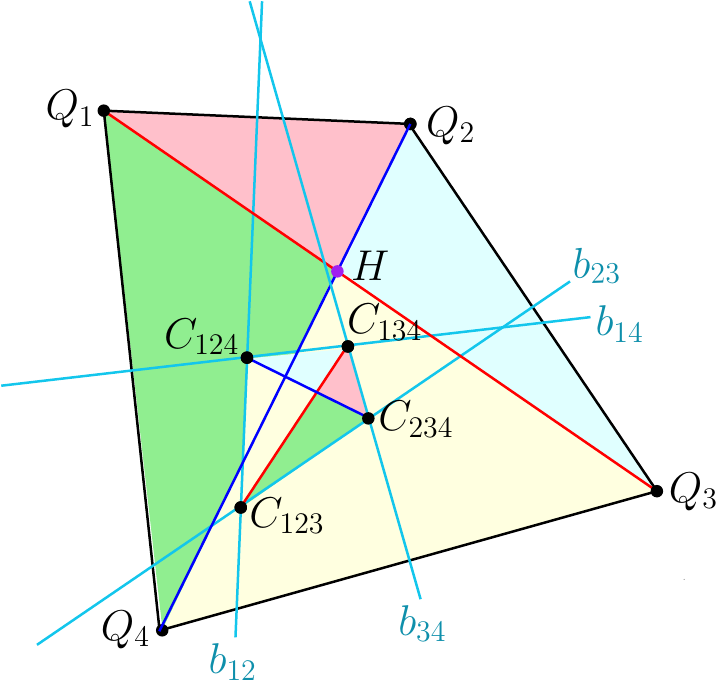}
    \caption{The quadrilateral cell $f(P_k)=\square(C_{123}C_{124}C_{134}C_{234})$ of $V_k(S)$ is obtained by perpendicular bisector construction from $\{Q_1, Q_2, Q_3, Q_4\} \subset S.$ 
    Point $H$ given by Equation~(\ref{eqn:H}) is the intersection point of diagonals $Q_1Q_3$ and $Q_2Q_4.$
    Triangles with same colour have proportional area.
    }
    \label{fig:Sibson4gon}
\end{figure} 

Note that the point $H$ is a convex combination of $Q_1$ and of $Q_3$, and $H$ is also a convex combination of $Q_2$ and $Q_4$, by the right side of Equation~(\ref{Sibson4gonface}).  Then point $H$ is the intersection point of diagonals $Q_1Q_3$ and $Q_2Q_4$ of $\square(Q_1Q_2Q_3Q_4).$

This implies the following corollary.
\begin{corollary}
 Given a quadrilateral cell $\square(C_{123}C_{124}C_{134}C_{234})$ of $V_k(S)$, the four corresponding points $Q_1, Q_2, Q_3, Q_4$ of $S$ that participate in the perpendicular bisectors that define $\square(C_{123}C_{124}C_{134}C_{234})$,  also form a convex quadrilateral, $\square(Q_1Q_2Q_3Q_4).$   
\end{corollary}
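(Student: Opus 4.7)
The plan is to extract the corollary directly from the geometric content already encoded in Equation~(\ref{Sibson4gonface}). The right-hand sides of that equation exhibit the single point $H$ in two different ways: first as a convex combination of $Q_1$ and $Q_3$ with coefficients equal to the ratios of the triangle areas $\sigma(\Delta(C_{123}C_{134}C_{234}))$ and $\sigma(\Delta(C_{123}C_{124}C_{134}))$ over $\sigma(\square(C_{123}C_{124}C_{134}C_{234}))$, and second as an analogous convex combination of $Q_2$ and $Q_4$. I would first read this as saying $H \in \overline{Q_1Q_3}$ and simultaneously $H \in \overline{Q_2Q_4}$, so the two diagonals of the four-point set $\{Q_1,Q_2,Q_3,Q_4\}$ meet at $H$.

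Next I would argue that $H$ lies in the \emph{relative interior} of both segments, i.e.\ that all four coefficients are strictly positive. For this, one needs each of the four triangles $\Delta(C_{123}C_{134}C_{234})$, $\Delta(C_{123}C_{124}C_{134})$, $\Delta(C_{124}C_{134}C_{234})$, $\Delta(C_{124}C_{234}C_{123})$ obtained by splitting the quadrilateral cell along one of its diagonals to have positive area. Since $\square(C_{123}C_{124}C_{134}C_{234})$ is an honest two-dimensional cell of $V_k(S)$ listed in cyclic order, its four vertices are distinct and no three are collinear (this is guaranteed by the general position assumption on $S$, which forces the circumcenters $C_{ij\ell}$ to be distinct and non-collinear in the required way). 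Hence all four sub-triangle areas are strictly positive, and the two convex combinations place $H$ strictly between $Q_1$ and $Q_3$, and strictly between $Q_2$ and $Q_4$.

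Finally, I would invoke the elementary fact that four points in the plane form a convex quadrilateral whose diagonals are $Q_1Q_3$ and $Q_2Q_4$ if and only if the open segments $\overline{Q_1Q_3}$ and $\overline{Q_2Q_4}$ meet at an interior point. Applying this to the point $H$ just produced yields that $Q_1,Q_2,Q_3,Q_4$ are in convex position with $Q_1Q_3$ and $Q_2Q_4$ as diagonals, which gives the convex quadrilateral $\square(Q_1Q_2Q_3Q_4)$.

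The only nontrivial step is the strict positivity of the four coefficients; everything else is a direct reading of Equation~(\ref{Sibson4gonface}) plus the elementary characterization of convex position via crossing diagonals. The positivity claim reduces to the non-degeneracy of the quadrilateral cell $f(P_k)$, which is exactly what the hypothesis provides, so no further work beyond citing general position is needed.
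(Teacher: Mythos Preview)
Your proposal is correct and follows essentially the same approach as the paper: read Equation~(\ref{Sibson4gonface}) as placing $H$ on both segments $Q_1Q_3$ and $Q_2Q_4$, then conclude convex position from the crossing diagonals. You are in fact more careful than the paper, which does not explicitly justify the strict positivity of the coefficients; your remark that non-degeneracy of the cell $f(P_k)$ under general position forces all four sub-triangle areas to be positive fills a small gap the paper leaves implicit.
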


More can be said about the areas of triangles with vertices from the set\newline $\{C_{123}, C_{124}, C_{134}, C_{234} \}$, within the  quadrilateral cell $\square(C_{123}C_{124}C_{134}C_{234})$ of $V_k(S).$
We show next, how these triangle areas are related to triangle areas within the  quadrilateral $\square(Q_1Q_2Q_3Q_4).$
As we will see, in the case of a quadrilateral cell of $V_k(S)$,  Theorem~\ref{thm:Sibsonkface}  has an analogous statement for the corresponding  quadrilateral formed by four points of $S$.
Thereto, we recall a folklore result:
\begin{property}\label{prop:triangle}
    Let $P$ be a point contained in a triangle $\Delta(ABC)$. Then $P$ can be expressed as
    \begin{equation*}
        P= A \cdot \frac{\sigma(\Delta(PBC))}{\sigma(\Delta(ABC))} + B \cdot \frac{\sigma(\Delta(PAC))}{\sigma(\Delta(ABC))} + C \cdot \frac{\sigma(\Delta(PAB))}{\sigma(\Delta(ABC))}.
    \end{equation*}
\end{property}

 We also expect next Property~\ref{prop:quadri} to be known. The following proof allows us to infer the relation among triangle areas we want to show.
 
\begin{property}\label{prop:quadri}
    Let $\square(Q_1Q_2Q_3Q_4)$ be a convex quadrilateral. Then, 
    \begin{equation}\label{eqn:quadrilateralnice}
        Q_1 \cdot \sigma(\Delta(Q_2Q_3Q_4)) + Q_3 \cdot \sigma(\Delta(Q_1Q_2Q_4))
        = Q_2 \cdot \sigma(\Delta(Q_1Q_3Q_4)) + Q_4 \cdot \sigma(\Delta(Q_1Q_2Q_3)).
    \end{equation}
\end{property}
\begin{proof}
    Let $H$ be the intersection point of the two diagonals $Q_1Q_3$ and $Q_2Q_4$.
    Apply Property~\ref{prop:triangle} to the triangle $\Delta(Q_1Q_2Q_3)$ and $P=H.$ Then $P$ lies on the edge $Q_1Q_3$, and the degenerate triangle $\Delta(Q_1HQ_3)$ has zero area. It follows that
    \begin{equation}\label{eqn:H1}
        H=Q_1 \cdot \frac{\sigma(\Delta(HQ_2Q_3))}{\sigma(\Delta(Q_1Q_2Q_3))} + Q_3 \cdot \frac{\sigma(\Delta(HQ_1Q_2))}{\sigma(\Delta(Q_1Q_2Q_3))}.
    \end{equation}
    Repeat the same argument for the triangles $\Delta(Q_1Q_3Q_4)$, $\Delta(Q_2Q_3Q_4)$, and $\Delta(Q_1Q_2Q_4)$ to obtain
    \begin{equation}\label{eqn:H2}
        H=Q_1 \cdot \frac{\sigma(\Delta(HQ_3Q_4))}{\sigma(\Delta(Q_1Q_3Q_4))} + Q_3 \cdot \frac{\sigma(\Delta(HQ_1Q_4))}{\sigma(\Delta(Q_1Q_3Q_4))},
    \end{equation}
    \begin{equation}\label{eqn:H3}
        H=Q_2 \cdot \frac{\sigma(\Delta(HQ_3Q_4))}{\sigma(\Delta(Q_2Q_3Q_4))} + Q_4 \cdot \frac{\sigma(\Delta(HQ_2Q_3))}{\sigma(\Delta(Q_2Q_3Q_4))},
    \end{equation}
    \begin{equation}\label{eqn:H4}
        H=Q_2 \cdot \frac{\sigma(\Delta(HQ_1Q_4))}{\sigma(\Delta(Q_1Q_2Q_4))} + Q_4 \cdot \frac{\sigma(\Delta(HQ_1Q_2))}{\sigma(\Delta(Q_1Q_2Q_4))}.
    \end{equation}
    %Finally, Equation~(\ref{eqn:area1}) follows 
    Combine Equations~(\ref{eqn:H1}) and~(\ref{eqn:H2}) to obtain
    
    \begin{equation}\label{eqn:area1}
        H= Q_1 \cdot \frac{\sigma(\Delta(Q_2Q_3Q_4))}{\sigma(\square(Q_1Q_2Q_3Q_4))} + Q_3 \cdot \frac{\sigma(\Delta(Q_1Q_2Q_4))}{\sigma(\square(Q_1Q_2Q_3Q_4))}.
    \end{equation}
    In the same way, combine Equations~(\ref{eqn:H3}) and~(\ref{eqn:H4}) to obtain
    \begin{equation}\label{eqn:area2}
        H= Q_2 \cdot \frac{\sigma(\Delta(Q_1Q_3Q_4))}{\sigma(\square(Q_1Q_2Q_3Q_4))} + Q_4 \cdot \frac{\sigma(\Delta(Q_1Q_2Q_3))}{\sigma(\square(Q_1Q_2Q_3Q_4))}.
    \end{equation}
    Finally, Equation~(\ref{eqn:quadrilateralnice}) follows from Equations~(\ref{eqn:area1}) and~(\ref{eqn:area2}). 

\end{proof}

The quadrilateral cell $\square(C_{123}C_{124}C_{134}C_{234})$ of $V_k(S)$ corresponding to $\square(Q_1Q_2Q_3Q_4)$ can be obtained from the so-called perpendicular bisector construction, see~\cite{radko20128bisector}.
Furthermore, 
\begin{equation}
 \sigma( \square(C_{123}C_{124}C_{134}C_{234})) =|r| \cdot \sigma(\square(Q_1Q_2Q_3Q_4)),
\end{equation}
where  
\begin{equation*}
    r= \frac{1}{4}(\cot(\alpha)+\cot(\gamma))\cdot (\cot(\beta)+\cot(\delta))
\end{equation*}
and $\alpha, \beta, \gamma$ and $\delta$ are the four interior angles of  $\square(Q_1Q_2Q_3Q_4)$ in consecutive order, see~\cite{radko20128bisector}.
From Equations~(\ref{Sibson4gonface}) and~(\ref{eqn:area1}) we see that the coefficient multiplying point $Q_1$ must be the same, then 
\begin{equation*}
    \frac{\sigma(\Delta(C_{123}C_{134}C_{234}))}{\sigma(\square(C_{123}C_{124}C_{134}C_{234}))} =\frac{\sigma(\Delta(Q_2Q_3Q_4))}{\sigma(\square(Q_1Q_2Q_3Q_4))}
\end{equation*}
and
\begin{equation*}
    \sigma(\Delta(C_{123}C_{134}C_{234}))=|r|\cdot \sigma(\Delta(Q_2Q_3Q_4)).
\end{equation*}
The other triangle areas can be related analogously. We also refer to~\cite{mammana2008affine} where it is proved that $\square(Q_1Q_2Q_3Q_4)$ and $\square(C_{123}C_{124}C_{134}C_{234}))$ are affine.\\

Let us then consider the case when $f(P_k)$ is a cell of $V_k(S)$ with more than four sides.
Equation~(\ref{eqn:H}) gives a point $H$ that can be expressed in two ways as convex combination of points of $S$.  Let us look at a pentagonal cell $f(P_k)=\pentago(C_{123}C_{134}C_{145}C_{245}C_{125})$ of $V_k(S)$; See Figure~\ref{fig:F6-pentagon}.   For $r>5$ the situation is similar.  Theorem~\ref{thm:Sibsonkface} here gives 
\begin{equation*}
    \begin{split}        
        H&=Q_1 \cdot \frac{\sigma(\square(C_{123}C_{125}C_{145}C_{134}))}{\sigma(\pentago(C_{123}C_{134}C_{145}C_{245}C_{125}))} + Q_5 \cdot \frac{\sigma(\Delta(C_{125}C_{245}C_{145}))}{\sigma(\pentago(C_{123}C_{134}C_{145}C_{245}C_{125}))}\\
        &= Q_2 \cdot \frac{\sigma(\square(C_{245}C_{125}C_{123}C_{234}))}{\sigma(\pentago(C_{123}C_{134}C_{145}C_{245}C_{125}))} + Q_4 \cdot \frac{\sigma(\Delta(C_{245}C_{234}C_{1345}C_{145}))}{\sigma(\pentago(C_{123}C_{134}C_{145}C_{245}C_{125}))}\\
        &+ Q_3 \cdot \frac{\sigma(\Delta(C_{123}C_{234}C_{134}))}
        {\sigma(\pentago(C_{123}C_{134}C_{145}C_{245}C_{125}))}
    \end{split}
\end{equation*}

We get that $H$ lies on the segment $Q_1Q_5$ and inside the triangle $\Delta(Q_2Q_3Q_4)$. Furthermore, $H$ divides the segment $Q_1Q_5$ in the same proportion as the edge $C_{125}C_{145}$ divides the pentagon $\pentago(C_{123}C_{134}C_{145}C_{245}C_{125})$ into the quadrilateral $\square(C_{125}C_{145}C_{134}C_{123})$ and the triangle $\Delta(C_{125}C_{145}C_{245})$.
And $H$ divides triangle $\Delta(Q_1Q_2Q_3)$ in the same proportion into triangles $\Delta(Q_3HQ_4)$, $\Delta(Q_2HQ_3)$, and $\Delta(Q_2HQ_4)$ as $C_{234}$ divides $\pentago(C_{123}C_{134}C_{145}C_{245}C_{125})$ into $\square(C_{245}C_{125}C_{123}C_{234})$, $\square(C_{245}C_{234}C_{134}C_{145})$ and 
 $\Delta(C_{134}C_{234}C_{123})$.\\

\begin{figure}[!ht]
	\centering
    \subfloat[]{\includegraphics[scale=0.3,page=1]{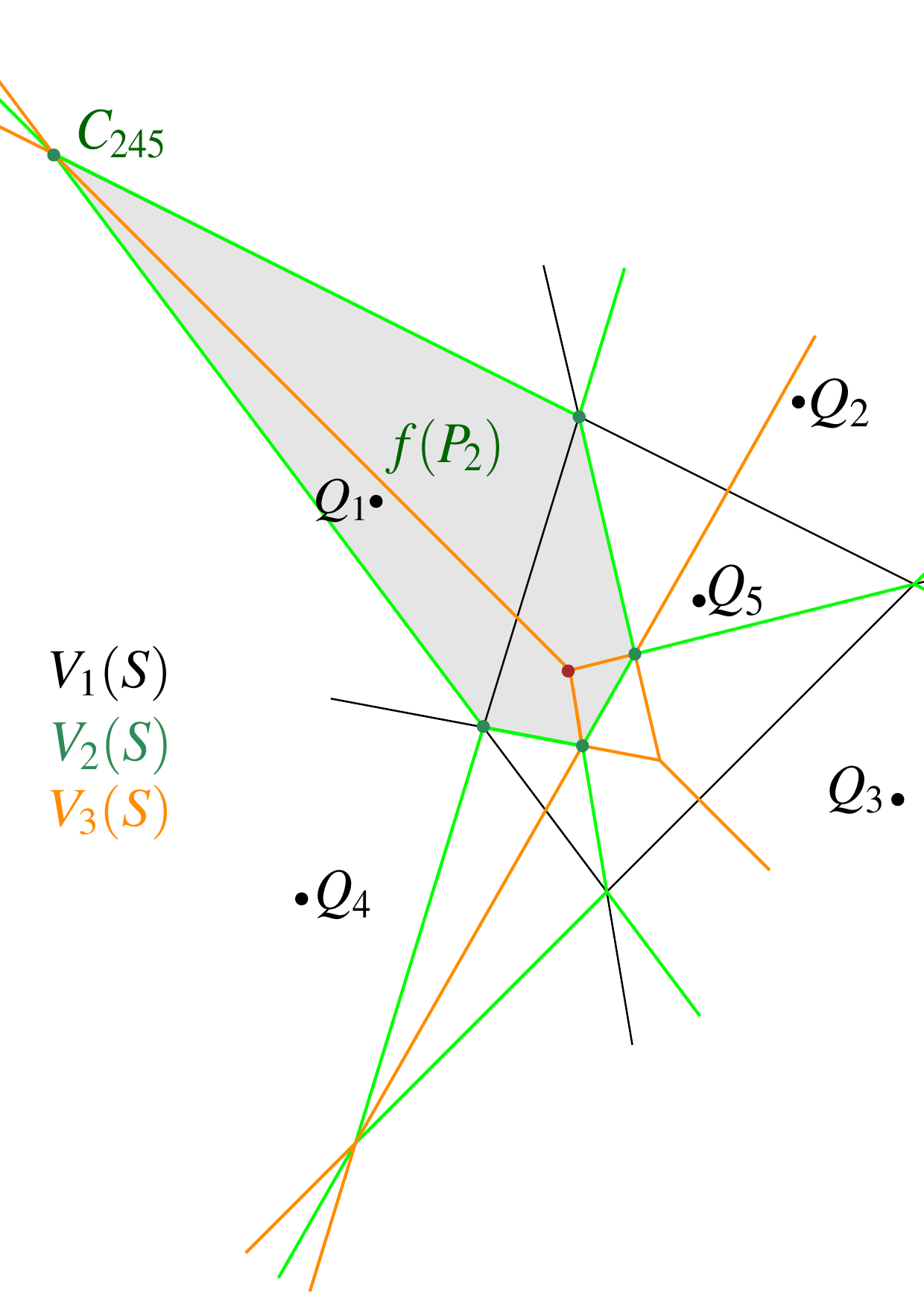}}
    \hspace{-1cm}	
    \subfloat[]{\includegraphics[scale=0.3,page=2]{F6-penta.pdf}}
	\caption{ 
        (a) $V_3(S)$ for a set of five points $S=\{Q_1,Q_2,Q_3, Q_4, Q_5\}.$ For $P_2=\{ Q_1, Q_5\}$, the grey region is the pentagonal cell $f(P_2)$ of $V_2(S).$ $f(P_2)$ is divided by an edge of $V_1(S)$ and is also divided by three edges of $V_3(S)$. (b) The point $H$ lies on the segment $Q_1Q_5$ and inside the triangle $\Delta(Q_2,Q_3,Q_4)$. Triangle areas of $\Delta(Q_2HQ_3), \Delta(Q_3HQ_4)$ and $\Delta(Q_2HQ_4)$ are proportional to the areas of the three coloured regions inside $f(P_2)$, green, yellow, and pink, respectively. The lengths of segments $HQ_1$ and $HQ_5$ are proportional to the areas $\sigma(f(P_2) \cap f(\{Q_1\}))$ and 
        $\sigma(f(P_2) \cap f(\{Q_5\}))$, respectively.
	}
	\label{fig:F6-pentagon}
\end{figure}

\section{Coordinates based on Voronoi diagrams}\label{sec:coordinatesVoronoiDiagrams}

In this section we present our generalization of Sibson's formula that expresses a point $Q_\ell \in S$ as a convex combination of points from $S$ using its neighbours of the Voronoi diagram of any given order.

\begin{theorem}\label{thm:SibsonR_k}
    Let $1\leq k\leq n-2$ and let $R_k(\ell)$ be a bounded region. Then,
    \begin{equation*}
        Q_\ell=
        \sum_{f(P_k)\in R_k(\ell) } \ \ \sum_{\substack{f(P_{k+1}) \in V_{k+1}(S)\\ Q_j \in P_{k+1}\setminus P_k}}\frac{\sigma (f(P_{k+1})\cap f(P_k))}{\sigma (R_k(\ell))}Q_j
    \end{equation*}
    
\end{theorem}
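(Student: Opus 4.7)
The plan is to induct on $k$, taking Sibson's formula (Theorem~\ref{thm:sibson}) as the base case $k=1$, and Aurenhammer's formula (Theorem~\ref{thm:Sibsonkface}) together with the induction hypothesis as the engine of the inductive step. For $k=1$ the region $R_1(\ell)$ is the single cell $f(\{Q_\ell\})$ and the only $P_k \in R_k(\ell)$ is $\{Q_\ell\}$ itself, so the stated identity becomes exactly Sibson's formula.

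Now assume the theorem holds at order $k-1$ and fix $k \geq 2$ with $R_k(\ell)$ bounded, so every $f(P_k) \in R_k(\ell)$ is bounded as well. Multiplying the desired identity by $\sigma(R_k(\ell))$, the target becomes
\[
\sigma(R_k(\ell))\, Q_\ell = \sum_{f(P_k)\in R_k(\ell)}\ \sum_{\substack{f(P_{k+1})\in V_{k+1}(S)\\ Q_j\in P_{k+1}\setminus P_k}} \sigma(f(P_{k+1})\cap f(P_k))\, Q_j.
\]
Applying Aurenhammer's formula cell by cell to each $f(P_k) \in R_k(\ell)$ exchanges the $V_{k+1}$-inner sum on the right with the corresponding $V_{k-1}$-inner sum, so it suffices to prove $L_k(\ell) = \sigma(R_k(\ell))\, Q_\ell$, where
\[
L_k(\ell) := \sum_{f(P_k)\in R_k(\ell)}\ \sum_{\substack{f(P_{k-1})\in V_{k-1}(S)\\ Q_i\in P_k\setminus P_{k-1}}} \sigma(f(P_{k-1})\cap f(P_k))\, Q_i.
\]

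Split $L_k(\ell)$ according to whether $Q_i = Q_\ell$ or $Q_i \neq Q_\ell$. In the first case $P_{k-1} = P_k \setminus \{Q_\ell\}$, and $f(P_k) \cap f(P_k \setminus \{Q_\ell\})$ consists of exactly those points whose $(k-1)$ nearest neighbours avoid $Q_\ell$ but whose $k$-th nearest is $Q_\ell$; as $P_k$ ranges over cells of $V_k(S)$ containing $Q_\ell$, these intersections partition $R_k(\ell) \setminus R_{k-1}(\ell)$, so this subcase contributes $\sigma(R_k(\ell) \setminus R_{k-1}(\ell))\, Q_\ell$. In the second case $Q_\ell \in P_{k-1}$; reindexing the double sum by $P_{k-1}$ with $Q_\ell \in P_{k-1}$ (equivalently, $f(P_{k-1}) \in R_{k-1}(\ell)$) and $Q_i \in S \setminus P_{k-1}$ with $P_k = P_{k-1} \cup \{Q_i\}$ rewrites it as precisely $\sigma(R_{k-1}(\ell))$ times the right-hand side of Theorem~\ref{thm:SibsonR_k} applied at order $k-1$, which by the induction hypothesis equals $\sigma(R_{k-1}(\ell))\, Q_\ell$. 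Adding the two contributions yields $L_k(\ell) = \sigma(R_k(\ell) \setminus R_{k-1}(\ell))\, Q_\ell + \sigma(R_{k-1}(\ell))\, Q_\ell = \sigma(R_k(\ell))\, Q_\ell$, completing the induction. I expect the main technical subtlety to be justifying the geometric partition
\[
R_k(\ell) \setminus R_{k-1}(\ell) = \bigsqcup_{P_k \ni Q_\ell} \bigl(f(P_k) \cap f(P_k \setminus \{Q_\ell\})\bigr),
\]
which rests on the standard facts that $f(P_{k-1}) \cap f(P_k) \neq \emptyset$ forces $P_{k-1} \subset P_k$, and that $\{f(P_{k-1}) \cap f(P_k) : P_{k-1} \subset P_k\}$ partitions $f(P_k)$ according to the $(k-1)$-nearest-neighbour subset of its points.
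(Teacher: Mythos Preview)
Your proposal is correct and follows essentially the same approach as the paper: induction on $k$ with Sibson's formula as base case, summing Aurenhammer's identity over the cells of $R_k(\ell)$, then splitting the resulting $V_{k-1}$-sum according to whether $Q_i=Q_\ell$ (contributing the Brillouin shell $R_k(\ell)\setminus R_{k-1}(\ell)$) or $Q_i\neq Q_\ell$ (reindexing to invoke the induction hypothesis at order $k-1$). The paper handles the geometric partition you flag by citing known structural properties of $R_k(\ell)$; your explicit justification via the facts that $f(P_{k-1})\cap f(P_k)\neq\emptyset$ forces $P_{k-1}\subset P_k$ and that these intersections refine $f(P_k)$ is exactly what is needed.
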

\begin{proof}
    Since $R_1(\ell)$ is the cell of $Q_\ell$ in $V_1(S)$, for $k=1$ the statement is equivalent to Theorem~\ref{thm:sibson} the formula of Sibson \cite{sibson1980vector}, i.e., the statement holds for $k=1$.
    
    Now, by induction, suppose the hypothesis is true for $R_{k-1}(\ell)$.
    By summing the equation given in Theorem~\ref{thm:Sibsonkface} for cells in $R_k(\ell)= \cup_{\ell\in P_k}f(P_k)$, we have

    \begin{equation}\label{eq:4}
        \begin{split}
            \sum_{f(P_k)\in R_k(\ell) } {\ \ \sum_{\substack{f(P_{k+1}) \in V_{k+1}(S)\\ Q_j \in P_{k+1}\setminus P_k}}\sigma (f(P_{k+1})\cap f(P_k))Q_j}& =\\
            =\sum_{f(P_k)\in R_k(\ell) } {\ \ \sum_{\substack{f(P_{k-1}) \in V_{k-1}(S)\\ Q_i \in P_k\setminus P_{k-1}}}\sigma (f(P_{k-1})\cap f(P_k))Q_i}& =\\
            =\sum_{f(P_k)\in R_k(\ell) } {\ \ \sum_{\substack{f(P_{k-1}) \in V_{k-1}(S)\\ Q_i \in P_k\setminus P_{k-1}\\Q_i\neq Q_\ell}}\sigma (f(P_{k-1})\cap f(P_k))Q_i}&\ +\\
            +\sum_{f(P_k)\in R_k(\ell) } {\ \ \sum_{\substack{f(P_{k-1}) \in V_{k-1}(S)\\ Q_\ell \in P_k\setminus P_{k-1}}}\sigma (f(P_{k-1})\cap f(P_k))Q_\ell}&\\
        \end{split}
    \end{equation}

    By properties of $R_k(\ell)$, see \cite{CdH21}, $R_k(\ell)$ without the cells that have $Q_\ell$ as its $k$-th nearest neighbour is the region of $Q_\ell$ in the previous order diagram, $R_{k-1}(\ell)$. See Figure~\ref{fig:Rk}. Then, we have that 
    
    \begin{equation*}
        \begin{split}
            &\sum_{f(P_k)\in R_k(\ell) } {\ \ \sum_{\substack{f(P_{k-1}) \in V_{k-1}(S)\\ Q_i \in P_k\setminus P_{k-1}\\Q_i\neq Q_\ell}}\sigma (f(P_{k-1})\cap f(P_k))Q_i}=\\
            =&\sum_{f(P_{k-1})\in R_{k-1}(\ell) } {\ \ \sum_{\substack{f(P_{k}) \in V_{k}(S)\\ Q_{i'} \in P_k\setminus P_{k-1}}}\sigma (f(P_k)\cap f(P_{k-1}))Q_{i'}}\\
        \end{split}
    \end{equation*}

    That is, the sum of the Lebesgue measure of the cells of $R_{k-1}(i)$ multiplied by the corresponding $k$-nearest neighbours coincides with the sum of the Lebesgue measure of the cells of $R_k(i)$, whose $k$-nearest neighbour is not $Q_\ell$, multiplied by the corresponding $k$-nearest neighbours. See Figures \ref{fig:Rk}, \ref{fig:R2} and \ref{fig:R3}.

    Since we assume that the statement is true for $R_{k-1}(\ell)$, then
    \begin{equation*}
        \sum_{f(P_{k-1})\in R_{k-1}(\ell) } {\ \ \sum_{\substack{f(P_{k}) \in V_{k}(S)\\ Q_{i'} \in P_k\setminus P_{k-1}}}\sigma (f(P_k)\cap f(P_{k-1}))Q_{i'}} =\sigma (R_{k-1}(\ell))Q_\ell
    \end{equation*}
    
    Now, replacing in Equation~(\ref{eq:4}):

    \begin{equation*}
        \begin{split}
            &\sum_{f(P_k)\in R_k(\ell) } {\ \ \sum_{\substack{f(P_{k+1}) \in V_{k+1}(S)\\ Q_j \in P_{k+1}\setminus P_k}}{\sigma (f(P_{k+1})\cap f(P_k))}Q_j}=\\
            &=\sum_{f(P_{k-1})\in R_{k-1}(\ell) } {\ \ \sum_{\substack{f(P_{k}) \in V_{k}(S)\\ Q_{i'} \in P_k\setminus P_{k-1}}}\sigma (f(P_k)\cap f(P_{k-1}))Q_{i'}}\ +\\
            &+\sum_{f(P_k)\in R_k(\ell) } {\ \ \sum_{\substack{f(P_{k-1}) \in V_{k-1}(S)\\ Q_\ell \in P_k\setminus P_{k-1}}}\sigma (f(P_{k-1})\cap f(P_k))Q_\ell}=\\
            &=\sigma (R_{k-1}(\ell))Q_\ell+\sum_{f(P_k)\in R_k(\ell) } {\ \ \sum_{\substack{f(P_{k-1}) \in V_{k-1}(S)\\ Q_\ell \in P_k\setminus P_{k-1}}}\sigma (f(P_{k-1})\cap f(P_k))Q_\ell}=\\
            &=\sigma (R_k(\ell))Q_\ell
        \end{split}
    \end{equation*}
    \begin{figure}[!ht]
    	\centering
    		\includegraphics[scale=0.5,page=5]{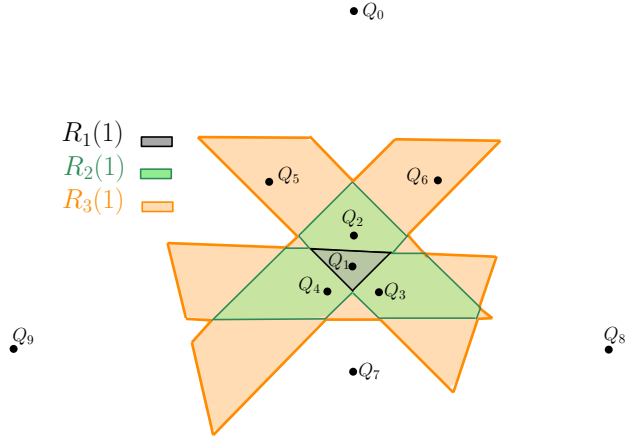}
    	\caption{For a set $S=\{Q_0,\cdots,Q_9\}$ of $10$ points. $R_1(1)\subset R_2(1)\subset R_3(1)$. Points in $R_3(1) \setminus R_2(1)$ have $Q_1$ as their third nearest neighbour. Analogously, points in $R_2(1) \setminus R_1(1)$ have $Q_1$ as their second nearest neighbour.
    	}
    	\label{fig:Rk}
    \end{figure}
    \begin{figure}[!ht]
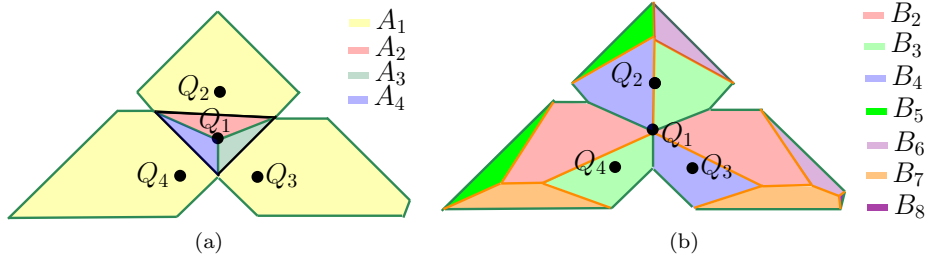

    	\centering
        
    	\subfloat[]{
    		\includegraphics[scale=0.75,page=13]{F2-Sibson.pdf}
    	}~
    	\subfloat[]{
    		\includegraphics[scale=0.75,page=10]{F2-Sibson.pdf}
    	}
    	\caption{
            For the same set $S$ of Figure~\ref{fig:Rk}:
    		(a) Regions $A_i$ of $R_2(1)$, are the union of cells of $V_3(S)$ whose points have $Q_i$ as the second nearest neighbour of $S$. Cells with the same second nearest neighbour get the same colour.
    		(b) Regions $B_i$ of $R_2(1)$, are the union of cells of $V_3(S)$ whose points have $Q_i$ as the third nearest neighbour of $S$. Cells with the same third nearest neighbour get the same colour.
            $\sum_{i=1\ldots 4} {\sigma(A_i)Q_i} = \sum_{i=2\ldots 8}{\sigma(B_i)Q_i}$.
    	}
    	\label{fig:R2}
 
    \end{figure}
    \begin{figure}[!ht]
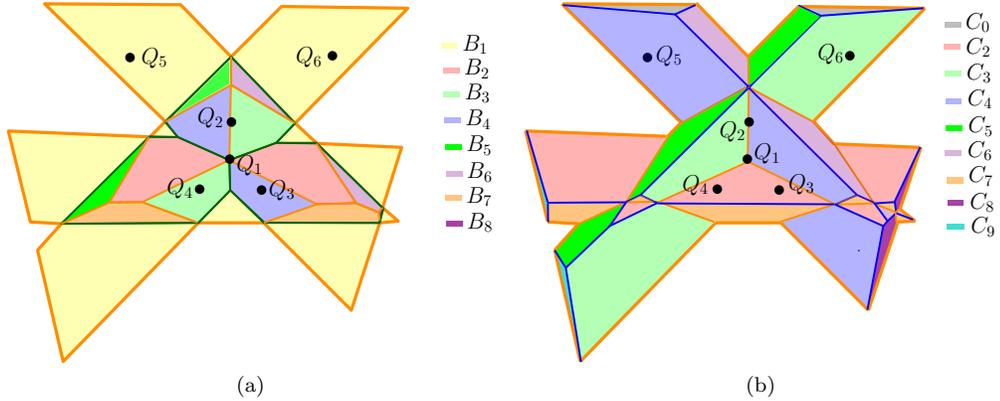

    	\centering
    	\subfloat[]{
    		\includegraphics[scale=0.6,page=14]{F2-Sibson.pdf}
    	}~
    	\subfloat[]{
    		\includegraphics[scale=0.6,page=11]{F2-Sibson.pdf}
    	}
    	\caption{
            For the same set $S$ of Figure~\ref{fig:Rk}:
            (a) Regions $B_i$ of $R_3(1)$, are the union of cells of $V_4(S)$ whose points that have $Q_i$ as the third nearest neighbour of $S$. It is shown that without $B_1$, it is the same as Figure~\ref{fig:R2} (c).
    	(b) Regions $C_i$ of $R_3(1)$, are the union of cells of $V_4(S)$ whose points have $Q_i$ as the fourth nearest neighbour of $S$.
            $\sum_{i=1\ldots 8} {\sigma(B_i)Q_i} = \sum_{i=0,2,3\ldots 9}{\sigma(C_i)Q_i}$.
    	}
    	\label{fig:R3}
    \end{figure}
\end{proof}

\newpage

\section{Towards higher order natural neighbour interpolation}\label{sec:interpolatión}

Sibson's theorem (Theorem~\ref{thm:sibson}) gave rise to the natural neighbour interpolation method. Given a set of points $S$ and known function values $G(Q_j)$ for $Q_j~\in~S\setminus\{Q_\ell\}$, the function value $G(Q_\ell)$ of a point $Q_\ell$ is interpolated by 
$G(Q_\ell)~=~\sum_j c_j G(Q_j)$,
where the sum is over the natural neighbours $Q_j$ of $Q_\ell$ in $V_1(S)$.
The local coordinates $c_j$ are given by Theorem~\ref{thm:sibson}. Note that they satisfy $\sum_j c_j =1$ and $c_j \geq 0$ for all $j$.
Then, Sibson's natural neighour interpolation is given by   
    \begin{equation}\label{eqn:interpol}
        G(Q_\ell) = \sum_{j\neq \ell}\frac{\sigma(f(\{Q_\ell,Q_j\})\cap f(\{Q_\ell\}))}{\sigma(f(\{Q_\ell\}))}G(Q_j).
    \end{equation}

The generalization of Sibson's formula given in Theorem~\ref{thm:SibsonR_k} suggests to approximate the function value $G(Q_\ell)$ by using the natural neighbours of higher order Voronoi diagrams. By using the region $R_k(\ell)$ for $k>1$, we can estimate the function value of a point $Q_\ell$ as 

\begin{equation}\label{eqn:interpolR}
    G(Q_\ell)=
    \sum_{f(P_k)\in R_k(\ell) } \ \ \sum_{\substack{f(P_{k+1}) \in V_{k+1}(S)\\ Q_j \in P_{k+1}\setminus P_k}}\frac{\sigma (f(P_{k+1})\cap f(P_k))}{\sigma (R_k(\ell))}G(Q_j)
\end{equation}

Note that  $R_1(\ell) = f(\{Q_\ell\})$ in $V_1(S)$, and for $k=1$ Equations~(\ref{eqn:interpol}) and~(\ref{eqn:interpolR}) coincide.\vspace{0.2cm}

A better estimation can be obtained by using Theorem~\ref{thm:SibsonR_k} in a combination of different values of $k$.

We explore this for the 1-dimensional case. First, we state a structural lemma.

\begin{lemma}\label{lem:1dim}
Let $S$ be set of $n$ different points on a line, and let $2 \leq k \leq n-2.$ Then, each bounded cell of $V_k(S)$ contains exactly one vertex of $V_{k-1}(S)$ and one vertex of $V_{k+1}(S)$.
\end{lemma}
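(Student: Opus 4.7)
The plan is to set up a completely explicit description of $V_j(S)$ on the line for $j\in\{k-1,k,k+1\}$ and then compare the locations of the relevant vertices by one-line inequalities.

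First I would order the points of $S$ so that $Q_1<Q_2<\cdots<Q_n$. For any $x\in\mathbb{R}$, the $j$ nearest points of $S$ to $x$ form a set of $j$ consecutive indices $\{Q_{i+1},\ldots,Q_{i+j}\}$, because as $x$ moves from $-\infty$ to $+\infty$ the set of nearest neighbours changes only by trading its leftmost element for the next element on the right, and this swap takes place exactly at the midpoint between them. It follows that the vertices of $V_j(S)$ are precisely the midpoints
\[
    v^{(j)}_i \;:=\; \frac{Q_i+Q_{i+j}}{2},\qquad i=1,\ldots,n-j,
\]
listed in increasing order, and that the bounded cells of $V_k(S)$ are exactly the $n-k-1$ intervals $[v^{(k)}_i,v^{(k)}_{i+1}]$ with $i=1,\ldots,n-k-1$.

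Next I would exhibit, inside each bounded cell $[v^{(k)}_i,v^{(k)}_{i+1}]$, a vertex of $V_{k-1}(S)$ and a vertex of $V_{k+1}(S)$. The natural candidates are
\[
    v^{(k-1)}_{i+1}=\frac{Q_{i+1}+Q_{i+k}}{2}\qquad\text{and}\qquad v^{(k+1)}_i=\frac{Q_i+Q_{i+k+1}}{2},
\]
and both lie in the cell by the immediate chains
\[
    v^{(k)}_i \;<\; v^{(k-1)}_{i+1} \;<\; v^{(k)}_{i+1} \qquad\text{and}\qquad v^{(k)}_i \;<\; v^{(k+1)}_i \;<\; v^{(k)}_{i+1},
\]
each step of which reduces to a comparison $Q_a<Q_b$ for $a<b$.

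Uniqueness falls out of the strict monotonicity of the sequences $j\mapsto v^{(k\pm 1)}_j$: it suffices to verify that the neighbouring candidates lie outside the cell, namely $v^{(k-1)}_i<v^{(k)}_i$, $v^{(k-1)}_{i+2}>v^{(k)}_{i+1}$, and the analogous pair for $V_{k+1}$, all of which again collapse to the ordering of the $Q_j$'s. I do not anticipate a genuine obstacle here; the only step that requires a moment of thought is pinning down the vertices of $V_j(S)$ on a line, and that in turn is transparent once one notices that neighbour swaps occur exactly at midpoints between points whose indices differ by $j$.
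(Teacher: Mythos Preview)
Your argument is correct and follows essentially the same route as the paper: order the points, identify the vertices of $V_j(S)$ as the midpoints $(Q_i+Q_{i+j})/2$, and then verify the two chains of inequalities that place $v^{(k-1)}_{i+1}$ and $v^{(k+1)}_i$ strictly inside the cell $[v^{(k)}_i,v^{(k)}_{i+1}]$. In fact you go a step further than the paper by explicitly checking uniqueness via the monotonicity of the vertex sequences, whereas the paper's proof only exhibits the two vertices and leaves the ``exactly one'' part implicit.
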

\begin{proof}
Let $S=\{x_1,\ldots ,x_n\}$, where $x_1<x_2<\ldots <x_n$. 
All points of $S$ lie on a same line $L$. 
The bisectors between two consecutive points of $S$ intersect $L$ at the vertices of the Voronoi diagram of order one $V_1(S)$, that is, the points $(x_i+x_{i+1})/2, \, i=1,\ldots,n-1$. 
Analogously, the points $(x_i+x_{i+k})/2, \, i=1,\ldots,n-k$, are the vertices of $V_k(S)$.

A bounded cell of $V_k(S)$ is the segment delimited by two consecutive vertices of $V_k(S)$, $P=(x_i+x_{i+k})/2$ and $R=(x_{i+1}+x_{i+k+1})/2$.
The point $A=(x_{i+1}+x_{i+k})/2$ belongs to $V_{k-1}(S)$ and fulfills $P<A<R$.
The point $B=(x_i+x_{i+k+1})/2$ belongs to $V_{k+1}(S)$ and fulfills $P<B<R$.
Then, in the segment $PR$ we find vertex $A$ from $V_{k-1}(S)$ and vertex $B$ from $V_{k+1}(S)$. 
\end{proof}

Theorem~\ref{thm:SibsonR_k}, respectively Theorem~\ref{thm:SibsonR_k}, for dimension 1 reduces to the following statement.
\begin{property}\label{prop:line}
    Let $S=\{x_0,x_1, \ldots x_{2\ell}\}$ with $x_0 < x_1 < \ldots < x_{2\ell}$ be  real numbers. Then,

    \begin{equation}\label{eq:line}
        x_\ell = \frac{1}{x_{2\ell}-x_{0}}\left( \left(
        \sum_{i=0}^{\ell-1} x_i (x_{\ell+1+i}-x_{\ell+i})   \right)
        + \left(\sum_{i=\ell+1}^{2\ell} x_i ( x_{i-\ell} - x_{i-\ell-1})\right)\right).
    \end{equation}
\end{property}
\begin{proof}
    The bounded cells of $V_k(S)$ are intervals, bounded by midpoints between points of $S$.%, that contain $k$ points of $S$ in its interior.
    
    By Lemma~\ref{lem:1dim}, for $k>1$, each such cell contains exactly one vertex (that is, a midpoint between two points from $S$) from $V_{k+1}(S)$ and exactly one vertex from $V_{k-1}(S).$ For a given point $x_\ell$, 
    the cells $f(P_k)$ of $V_k(S)$, with $x_\ell \in P_k$, satisfy that the $k$ points from $P_k$ are consecutive points among the points $\{x_{\ell-k}, x_{\ell-k+1}, \ldots, x_{\ell+k}\}$. 
    A term $\sigma(f(P_k) \cap f(P_{k+1}))$ of Theorem~\ref{thm:SibsonR_k} corresponds to an interval with endpoints a vertex from $V_k(S)$ and another vertex from $V_{k+1}(S)$. Since vertices from $V_{k+1}(S)$ and $V_k(S)$ appear in alternating order when walking along the real line, the interval corresponding to a point  
     $x_i$ with $x_i < x_{\ell}$, has endpoints ($x_{i+1-\ell} + x_i)/2$ and  ($x_{i-\ell} + x_i)/2$. And for a  point $x_i$ with $x_i > x_{\ell}$, the corresponding term $\sigma(f(P_k) \cap f(P_{k+1}))$ of Theorem~\ref{thm:SibsonR_k} is given by the interval with endpoints ($x_i + x_{i-\ell})/2$ and  ($x_i +x_{i-\ell-1} )/2$. 
     The term $\sigma(R_k(\ell))$ in Theorem~\ref{thm:SibsonR_k} is the length of the interval with endpoints $(x_0+x_\ell)/2$ and $(x_\ell + x_{2\ell})/2$. The statement of Property~\ref{prop:line} follows.  
 \end{proof}

\begin{remark}
    Property~\ref{prop:line} has actually a more general statement. The assumption $x_0 < x_1 < \ldots < x_{2\ell}$ is not needed.
    This follows easily: When expending the terms of the two sums on the right side of Equation~(\ref{eq:line}), all terms cancel out except $x_\ell x_{2\ell}$ and $-x_\ell x_0$. The given proof using $V_k(S)$ shows that Equation~(\ref{eq:line}) indeed is Theorem~\ref{thm:SibsonR_k} in dimension 1.
\end{remark}

We denote points $Q_i$ of $S$ as $x_i$ and their function values $G(Q_i)$ as $y_i$. When $k=1$ we have Sibson's classical nearest neighbour interpolation, which for dimension $d=1$ is piecewise linear interpolation. Let $x_0, x_1, \ldots, x_5$ be six points on the real line in that order. And let $x_2<x<x_3$ be a query point whose function value $G(x)$ we want to interpolate. To avoid degenerate cases where bisectors between points coincide, we also assume that all midpoints $(x_i+x_j)/2$ with $x_i, x_j \in \{S \cup \{x\}\}$ are different.
Sibson's classical formula, Equation~(\ref{eqn:interpol}), uses the two neighbours $x_2$ and $x_3$ of $x$, and gives the interpolation
\begin{equation}\label{eqn:sibson1d}
    G_1(x) = \frac{1}{x_3-x_2}\left( y_2(x_3-x) + y_3 (x-x_2)  \right),
\end{equation}
i.e. point $(x,G_1(x))$ lies on the line segment connecting points $(x_2,y_2)$ and $(x_3,y_3).$ This can also be deduced from  Property~\ref{prop:line}.

For $k=2$, from Equation~(\ref{eqn:interpolR}) we obtain
\begin{equation}\label{eqn:gensibson2}
    G'_2(x)=\frac{1}{x_4-x_1}\left( y_1(x_3 -x) +y_2(x_4-x_3)+y_3(x_2-x_1)+y_4(x-x_2)    \right)
\end{equation}
and for $k=3$,
\begin{equation}\label{eqn:gensibson3}
    \begin{split}
        G'_3(x)=&\frac{1}{x_5-x_0}\left( y_0(x_3 -x) +y_1(x_4-x_3)+y_2(x_5-x_4) \right.\\
         & \left.+y_3(x_1-x_0)+y_4(x_2-x_1)+y_5(x-x_2).\right)\\
    \end{split}
\end{equation}

Note that $x$ only appears in the first and the last summand in Equations~(\ref{eqn:gensibson2}) and~(\ref{eqn:gensibson3}). We therefore add
$\frac{x_3-x_2}{x_4-x_1}G_1(x)$
%, with $G_1(x)$ from
%Equation~(\ref{eqn:sibson1d}),
 to Equation~(\ref{eqn:gensibson2})
and obtain
\begin{equation*}
    G'_2(x)+\frac{x_3-x_2}{x_4-x_1}G_1(x)=\frac{1}{x_4-x_1}(y_1(x_3-x) + y_2(x_4-x)+y_3(x-x_1) + y_4(x-x_2) ).
\end{equation*}
 By estimating $G(x) = G_1(x)$ and also $G(x)=G'_2(x)$ we obtain $G(x)\frac{x_4-x_1+x_3-x_2}{x_4-x_1}$ on the left hand side of this equation, and then the estimate, for $k=2$,
\begin{equation}\label{eqn:interpol2}
    G_2(x)=\frac{1}{x_4-x_1 + x_3-x_2}(y_1(x_3-x) + y_2(x_4-x)+y_3(x-x_1) + y_4(x-x_2) ).
\end{equation}
In the same way, by adding the equations $G'_3(x)+\frac{x_4-x_1}{x_5-x_0}G'_2(x)+ \frac{x_3-x_2}{x_5-x_0}G_1(x),$ we get the estimate for $k=3,$
\begin{equation}\label{eqn:interpol3}
    \begin{split}
        G_3(x)=&\frac{1}{x_5-x_0+x_4-x_1+x_3-x_2  } (  y_0(x_3-x) +y_1(x_4-x)+ y_2(x_5-x)\\
        &+y_3(x-x_0)+ y_4(x-x_1)+y_5(x-x_2) ).
    \end{split}
\end{equation}

Figure~\ref{fig:sib1d} shows an example of the interpolation formulas given in Equations~(\ref{eqn:sibson1d}), (\ref{eqn:interpol2}), and~(\ref{eqn:interpol3}).

\begin{figure}[!ht]
	\centering	
	\includegraphics[scale=0.3, trim= 0cm 4cm 0cm 0cm, clip]{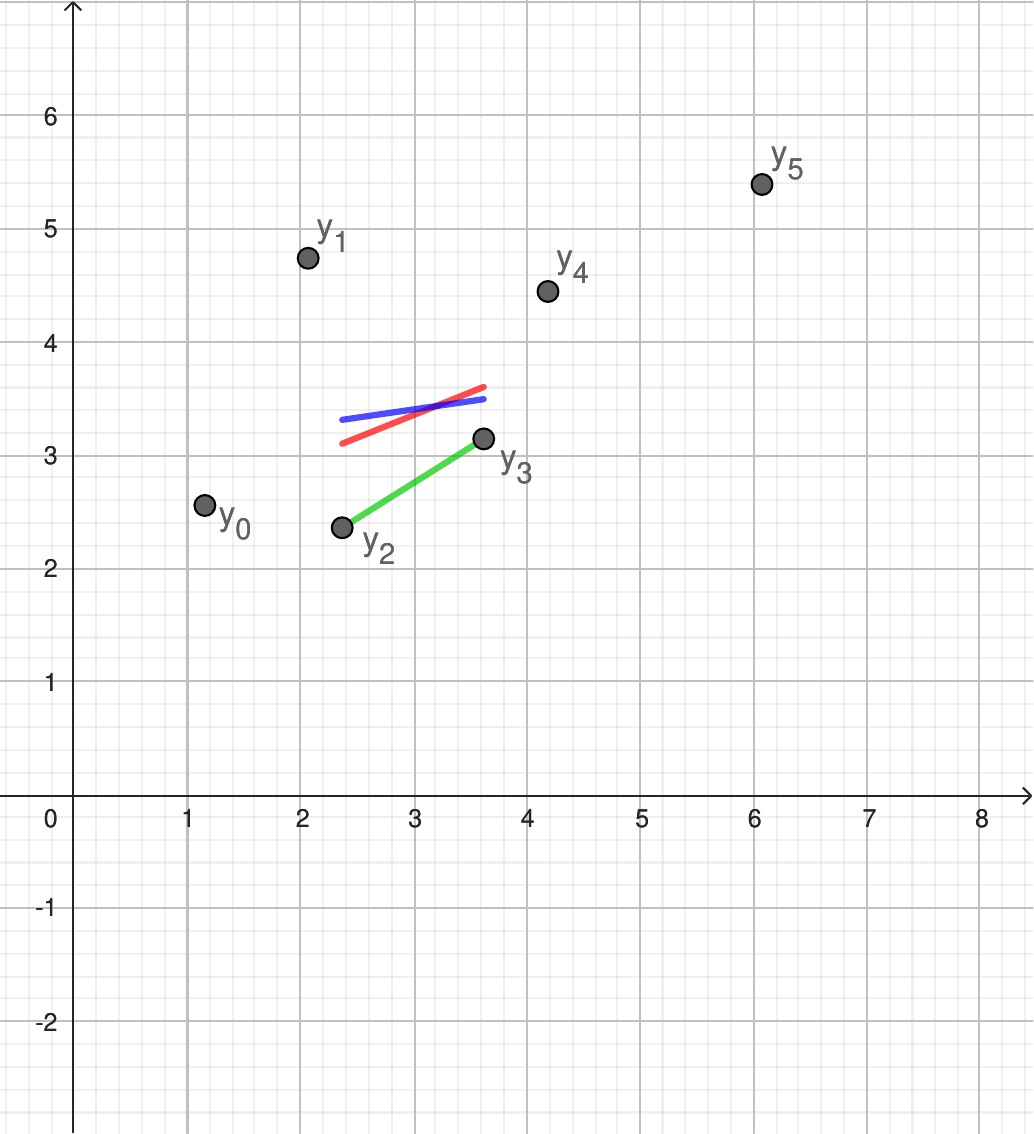}
	\caption{The generalized Sibson interpolation in $\mathbb{R}^1$. In green: Sibson's original interpolation, Equation~(\ref{eqn:sibson1d}), used only $R_1(x)$.
    The blue segment shows the interpolation using $R_1(x)$ and $R_2(x)$, given by Equation~(\ref{eqn:interpol2}). Four points are used. The red segment shows the interpolation using $R_1(x),$ $R_2(x),$ and $R_3(x)$, given by Equation~(\ref{eqn:interpol3}). 
    Six points are used.  
    }
 \label{fig:sib1d}	
\end{figure}

We conclude with some comments on the proposed interpolation formulas. First, they appear in a natural way from the generalization of Sibson's formula. This already makes it worth to study  such generalized interpolation formulas. 
In Equations~(\ref{eqn:interpol2}) and~(\ref{eqn:interpol3}), the coefficients $c_j$ in $G_i(x)=\sum_j c_j y_j$, $i=2,3$, satisfy $\sum_j c_j=1$ and $c_j \geq 0$ for every $c_j.$ 
We also mention that it can not be guaranteed that  $G_i(x)$ coincides with $G_i(x_2)$ or with $G_i(x_3)$, when $x$ coincides with one of the endpoints of the interval, $x_2$ or $x_3$, respectively. Though, we observe that in this case, the point farthest away from $x$ on one side, drops from being used in the interpolation formula. This also holds for the classical case $k=1.$

Finally, we expect that the generalized interpolation formulas can have applications. For instance, when the used values for the interpolation are obtained by measurements and measurement inaccuracy can not be ruled out. Then reliability might be improved by using  nearest neighbours from $V_k(S)$ or by using $R_k(x)$, instead of only $V_1(S)$.

\bibliographystyle{plainurl}
\bibliography{bibliografia}

\end{document}